\documentclass[journal,twoside,web]{ieeecolor}
\usepackage{lcsys}
\usepackage{amsmath,amssymb,amsfonts}
\usepackage{algpseudocode}
\usepackage{algorithm}
\usepackage{graphicx}
\usepackage{graphics}
\usepackage{textcomp}
\usepackage{subcaption}
\usepackage{cleveref}
\usepackage[font=small,skip=0pt]{caption}

\usepackage[backend=bibtex, sorting=none]{biblatex}
\AtNextBibliography{\small}

\def\BibTeX{{\rm B\kern-.05em{\sc i\kern-.025em b}\kern-.08em
    T\kern-.1667em\lower.7ex\hbox{E}\kern-.125emX}}
\newtheorem{theorem}{Theorem}

\newtheorem{remark}{Remark}

\newtheorem{assumption}{Assumption}

\newcommand{\revise}[1]{{\color{black} \noindent #1}}

\begin{document}
\title{Generalized Model Predictive Path Integral Control as Expectation--Maximization}

\author{Jiarui Wang, Sina Sharifi and Mahyar Fazlyab
\thanks{Manuscript received 8 April 2024; revised 15 May 2024;
accepted 1 June 2024. Date of publication xx xx xxxx;
date of current version 14 June 2024. }
\thanks{J. Wang, S. Sharifi, and M. Fazlyab are with the Department of Electrical and Computer Engineering at Johns Hopkins University, Baltimore, MD 21218, USA.
{\tt\small \{jwang486, sshari12, mahyarfazlyab\}@jhu.edu}}
}

\maketitle
\thispagestyle{empty}

\begin{abstract}
Model Predictive Path Integral (MPPI) control is a powerful sampling-based method for solving stochastic optimal control problems and has enabled real-time control in complex robotic systems. Despite its empirical success, its theoretical understanding remains limited. In this work, we show that MPPI can be interpreted as a special case of the Expectation–Maximization (EM) algorithm applied to a probabilistic inference formulation of optimal control. This perspective leads to a generalized EM-MPPI framework that extends MPPI beyond the commonly used Gaussian parameterization. We analyze the convergence behavior of this algorithm and characterize the local convergence rate in terms of the covariance of the posterior trajectory distribution and the exploration distribution. For exponential-family distributions, we establish a sufficient increase property of the log-likelihood when the log-partition function is strongly convex. Specializing the analysis to Gaussian MPPI yields explicit global and local convergence characterizations. The code for the experiments will be available upon acceptance.
\end{abstract}

\begin{IEEEkeywords}
Optimal control, Predictive control for nonlinear systems, Stochastic optimal control
\end{IEEEkeywords}

\section{Introduction}\label{sec:introduction}
Model Predictive Path Integral (MPPI) \cite{williams2016aggressive, williams2017information} is a widely used sampling-based method for nonlinear optimal control. Its strong empirical performance, together with modern parallel simulation tools such as Mujoco \cite{todorov2012mujoco} and Isaac Gym \cite{makoviychuk2021isaac}, has enabled real-time control in a broad range of robotic settings, including whole-body control \cite{alvarez2025real}, quadrotor navigation \cite{zhai2025pa}, 
and online policy adaptation \cite{wang2024residual}. 
MPPI updates control sequences by sampling noisy trajectories, evaluating their costs, and forming a weighted average of sampled controls, resulting in a simple update rule that can be efficiently parallelized.

Most existing derivations of MPPI are rooted in stochastic optimal control and control-as-inference viewpoints \cite{williams2016aggressive, williams2017information, honda2025model}, in which trajectory optimality is recast as an inference problem over trajectories. These perspectives explain the exponential weighting structure underlying the MPPI update, but they do not by themselves provide a general optimization-theoretic characterization of the algorithm or its convergence behavior. Recent works have started to study MPPI convergence properties \cite{yi2024covo, homburger2025optimality}. 
However, a general optimization perspective that links the MPPI update to a classical iterative optimization algorithm remains lacking.

\textbf{Contributions.}
In this work, we show that MPPI can be interpreted as an instance of the Expectation–Maximization (EM) algorithm. This perspective yields a unified probabilistic and optimization-theoretic interpretation of MPPI, and naturally suggests algorithmic generalizations beyond the standard Gaussian setting. Building on this perspective, we analyze the convergence of a general EM-MPPI algorithm and establish a monotonic improvement condition for exponential-family distributions with strongly convex log-partition functions. We then specialize the analysis to standard MPPI, for which we derive explicit convergence guarantees and characterize the resulting rate in closed form.

\section{Related Work}
\label{section:related-work}
\textbf{Control as Probabilistic Inference.}
Stochastic optimal control can be formulated as probabilistic inference by
introducing optimality variables whose likelihood depends exponentially on
trajectory cost or reward \cite{kappen2012optimal,honda2025model}. This yields
a posterior trajectory distribution proportional to a prior distribution
Boltzmann-reweighted by the negative cost, and optimal control can be interpreted
as computing or approximating expectations under this posterior. This viewpoint
has been developed in control as inference \cite{levine2018reinforcement} and
variational-inference MPC \cite{okada2020variational}, where  MPPI \cite{williams2016aggressive}, the Cross-Entropy Method (CEM) \cite{botev2013cross}, and
\revise{Covariance Matrix Adaptation Evolution Strategy (CMA-ES)} \cite{hansen2003reducing} arise under different choices of trajectory distributions and update
rules. 
\revise{
Closely related to our work, \cite{wang2021variational} derives a
weighted maximum-likelihood projection onto a parametric variational family;
For mixture-of-Gaussian families, this projection can be solved using an inner
EM algorithm. In contrast, we identify the exact expectation-based MPPI update
itself as an EM iteration.
}

\textbf{Expectation--Maximization Methods for Control.}
The EM algorithm \cite{dempster1977maximum,mclachlan2008algorithm}, originally
developed for latent-variable inference \cite{koller2009probabilistic}, has
also been applied to reinforcement learning and optimal control.
\cite{dayan1997using} interpreted certain reinforcement-learning updates as EM
by treating actions as latent variables and rewards as observations.
\cite{toussaint2011expectation} proposed an EM approach for solving
\revise{(Partially Observed) Markov Decision Processes} by introducing a random horizon,
so that trajectories and horizons are optimized as latent variables. EM has
also been used for stochastic optimal control under linear--Gaussian dynamics
and policies \cite{mallick2022stochastic}. These works demonstrate the value
of EM in control, but they do not identify standard MPPI itself as an EM
iteration or analyze MPPI convergence from this viewpoint. Our work fills this
gap by deriving MPPI as an EM-type algorithm over a parameterized control
distribution and using this interpretation to establish convergence guarantees.

\textbf{Theoretical Analysis of MPPI.}
Despite MPPI's empirical success, its theoretical analysis remains limited.
Existing works mainly study the standard sampling-based update. For example,
\cite{yi2024covo} establishes linear convergence rates for quadratic problems,
including time-varying LQR, and extends the analysis to nonlinear systems.
\cite{homburger2025optimality} studies optimality and suboptimality properties
of MPPI in stochastic and deterministic settings. These results provide
important guarantees for the standard MPPI update under assumptions on the
objective, dynamics, and sampling covariance. In contrast, we identify the EM
structure underlying MPPI and use it to derive convergence guarantees from a
latent-variable maximum-likelihood perspective.

\section{Preliminaries}

\subsection{Optimal Control and MPC}
Consider the finite-horizon optimal control problem
\begin{align}\label{eq:optimal-control}
    \min_{u} ~& J(u) = \sum_{h=1}^H c_h(x_h, u_h) + c_f(x_{H+1}) \notag \\
        \text{s.t.} ~& x_{h+1} = f_h(x_h, u_h), \quad h=1,\dots,H.
\end{align}
Here $u = [u_1, \dots, u_H]$ denotes the control sequence, $f_h: \mathbb{R}^{n_x} \times \mathbb{R}^{n_u} \rightarrow \mathbb{R}^{n_x}$ represents the system dynamics, $c_h :\mathbb{R}^{n_x} \times \mathbb{R}^{n_u} \rightarrow \mathbb{R}$ is the stage cost, and $c_f :\mathbb{R}^{n_x} \rightarrow \mathbb{R}$ is the terminal cost. We assume $J$ is bounded below with optimal value $J^* = \min_{u} J(u)$.
In MPC, problem \eqref{eq:optimal-control} is repeatedly solved in a receding-horizon manner. At each time step, the current state is used as the initial state $x_1$, an optimal control sequence is computed, and only the first control action $u_1$ is applied to the system. \revise{The horizon is then shifted forward, and the optimization problem is solved again (usually warm-started by shifting the control sequence obtained at the previous MPC step) after measuring the next state.}

\subsection{MPPI}
While nonlinear programming methods are widely used for solving \eqref{eq:optimal-control}, their applicability may be limited when the dynamics $f_h$ are non-differentiable or available only through a simulator. For example, in contact-rich systems, the dynamics may itself be defined implicitly through an optimization problem, making gradient-based methods difficult to apply.
In such scenarios, MPPI control provides a sampling-based alternative. Starting from a nominal control sequence \(u\), MPPI generates candidate control sequences by sampling perturbations around \(u\), evaluates their rollout costs, and updates the nominal sequence via a cost-weighted average:
\begin{align}\label{eq:MPPI}
    u_+ =
    \frac{
        \sum_{i=1}^N u[i] \exp \big(-\frac{J(u[i])}{\tau}\big)
    }{
        \sum_{i=1}^N \exp \big(-\frac{J(u[i])}{\tau}\big)
    },
    \quad 
    u[i] \overset{\text{i.i.d.}}{\sim} \mathcal{N}(u,\Sigma),
\end{align}
where $\tau>0$ is a temperature parameter and $\{u[i]\}_{i=1}^N$ are sampled control sequences. Each $u[i] = [u[i]_1, \dots, u[i]_H]$ represents a control trajectory drawn from the Gaussian distribution centered at the current nominal control sequence $u$.
In practice, the update \eqref{eq:MPPI} may be repeated multiple times within a single MPC step before applying the first control action, as implemented in practical systems (e.g., ROS2 Nav2 MPPI controller\footnote{\url{https://docs.ros.org/en/iron/p/nav2_mppi_controller/}}).

\section{Generalized MPPI from EM}
%
To formulate problem \eqref{eq:optimal-control} as an inference problem, instead of directly searching for a single optimal control sequence $u$, we consider a distribution $p(u;\theta)$ parameterized by $\theta$. 
We introduce an optimality variable $\mathcal{O}$ conditioned on $\mathcal{U}$, defined as a Bernoulli random variable with conditional probability
\begin{align}
    P(\mathcal{O}=1 \mid \mathcal{U}=u)
    =
    \exp \big(
        -\frac{J(u)-J^*}{\tau}
    \big),
\end{align}
\revise{where $\tau>0$ is a temperature parameter controlling the sharpness of the
optimality likelihood: smaller $\tau$ concentrates probability on trajectories
with costs close to $J^*$, while larger $\tau$ gives broader weight to
near-optimal trajectories.} This construction ensures $P(\mathcal{O}=1 \mid \mathcal{U}=u)\in (0,1]$ and assigns higher probability to sequences with lower cost.
The marginal log likelihood of the optimality event under $p(u;\theta)$ is
{\small
\begin{align} \label{eq:log-likelihood}
    \log P(\mathcal{O}=1;\theta)
    &= 
    \log \int
        p(u;\theta)
        \exp\big(
            -\frac{J(u)-J^*}{\tau}
        \big) du
        \notag \\
    &=
    \log \int
        p(u;\theta)
        \exp\big(
            -\frac{J(u)}{\tau}
        \big) du
        + \text{const}.
\end{align}
}
Provided the integral is finite, this defines a well-posed objective over \(\theta\), and the constant is independent of $\theta$. 
\revise{
This objective becomes large when $p(u;\theta)$ assigns more probability to control sequences that have lower trajectory costs.}
Therefore, instead of solving \eqref{eq:optimal-control} by directly optimizing a single control sequence, we optimize \(\theta\) so that the induced sampling distribution increasingly favors low-cost trajectories.
This maximum-likelihood viewpoint naturally leads to an EM procedure for updating \(\theta\).

%
In the probabilistic formulation above, the control sequence \(\mathcal U\) plays the role of a latent variable, while the optimality event \(\mathcal O=1\) is the observed variable. The goal is to maximize the marginal log likelihood \(\ell(\theta):=\log P(\mathcal O=1;\theta)\) with respect to \(\theta\). Since this is a latent-variable maximum-likelihood problem, it can be addressed using the EM algorithm.
To derive the EM updates, let \(q(u)\) be an auxiliary distribution over control sequences. Then from \eqref{eq:log-likelihood} we have
{\small
\begin{align}\label{eq:ELBO}
    \ell(\theta) \! 
    =&
    \log P(\mathcal{O}=1;\theta) \notag \\
    =& \log \int p(u;\theta) \exp(-\frac{J(u)}{\tau}) du + \text{const} \notag\\
    =& \log \int q(u) \frac{p(u;\theta) \exp(-\frac{J(u)}{\tau})}{q(u)} du + \text{const} \notag\\
    =& \log  \mathbb{E}_{u \sim q(u)} \bigg[ 
        \frac{p(u;\theta) \exp(-\frac{J(u)}{\tau})}{q(u)} 
        \bigg] \! + \! \text{const} \notag\\
    \geq& \mathbb{E}_{u \sim q(u)} \bigg[ 
        \log \frac{p(u;\theta) \exp(-\frac{J(u)}{\tau})}{q(u)} 
        \bigg] \! + \! \text{const} \notag\\
    =& \mathbb{E}_{u \sim q(u)} \big[ -\frac{J(u)}{\tau}\big] 
        + \mathbb{E}_{u \sim q(u)} \big[ \log \frac{p(u;\theta)}{q(u)} \big] + \text{const} \notag\\
    =& -\frac{1}{\tau} \bigg\{ 
        \mathbb{E}_{u \sim q(u)}[J(u)] + \tau KL(q \| p(\cdot;\theta))
    \bigg\} + \text{const},
\end{align}
}
where the inequality follows from Jensen's inequality applied to the concave logarithm. \eqref{eq:ELBO} is the evidence lower bound (ELBO), which is tight when \(q(u)\) is chosen as the posterior distribution of \(\mathcal U\) conditioned on the observation \(\mathcal O=1\).

\subsection{E-Step}
For fixed \(\theta\), the ELBO in \eqref{eq:ELBO} is maximized over \(q\) when Jensen's inequality is tight, i.e., when
$p(u;\theta) \exp(-\frac{J(u)}{\tau})/q(u)$ is constant in $u$. Using Bayes' rule, the optimal auxiliary distribution that maximizes the ELBO for the current $\theta$ is
\begin{align} \label{eq:E-step-q-gibbs}
    q(u;\theta) 
    &= p(u \mid \mathcal{O}=1;\theta) =\frac{
        p(u;\theta) \exp(-\frac{J(u)}{\tau})
    }{
        \int p(u;\theta) \exp(-\frac{J(u)}{\tau}) du
    }.
\end{align}
Therefore, for a fixed $\theta$, the ELBO \eqref{eq:ELBO} is maximized over $q$ by choosing $q(u)=q(u;\theta)$, which corresponds to the posterior distribution $p(u\mid \mathcal{O}=1;\theta)$.
\revise{
Thus, the E-step recovers the standard optimal-control posterior obtained
by Boltzmann reweighting of the current proposal distribution.}

\subsection{M-Step}
Given the ELBO constructed in the previous section, which is tight at the current $\theta$, 
the M-step updates the parameter by maximizing the ELBO with respect to $\theta$ \revise{ to find the next iterate $\theta_+$}, while fixing $q$ obtained from the E-step. Since the term \(\mathbb E_{u\sim q}[J(u)]\) in \eqref{eq:ELBO} does not depend on \(\theta_+\), this is equivalent to minimizing the Kullback--Leibler divergence between $q$ and the parametric distribution $p(\cdot;\theta_+)$:
\begin{align}\label{eq:weighted-mle-expectation}
    &\min_{\theta_+} KL\big(q(\cdot;\theta)\,\|\,p(\cdot;\theta_+)\big) \notag\\
    \Longleftrightarrow \;
    &\max_{\theta_+} \int q(u;\theta)\log p(u;\theta_+)\,du.
\end{align}
\revise{
Thus, the M-step recovers the standard variational projection step: it
approximates the Boltzmann-reweighted optimal-control distribution by the
next proposal \(p(\cdot;\theta_+)\) within the chosen parametric family.
}
Substituting the expression of $q(u;\theta)$ from the E-step yields
\begin{align}
    \max_{\theta_+}
    \frac{
        \int p(u;\theta)\exp(-\frac{J(u)}{\tau})\log p(u;\theta_+)\,du
    }{
        \int p(u;\theta)\exp(-\frac{J(u)}{\tau})\,du
    }.
\end{align}
which can be approximated using Monte Carlo estimation as 
\begin{align} \label{eq:wmle}
    &\max_{\theta_+} 
        \frac
        {\sum_{i=1}^N \exp(-\frac{J(u[i])}{\tau}) \log p(u[i]; \theta_+)}
        {\sum_{i=1}^N \exp(-\frac{J(u[i])}{\tau})} \notag\\
    \Longleftrightarrow 
    &\max_{\theta_+}  \sum_{i=1}^N w[i] \log p(u[i]; \theta_+).
\end{align}
\eqref{eq:wmle} is a weighted maximum likelihood estimate, where $u[i]$ are sampled from $p(u;\theta)$ and the weights $w[i]$ are computed as
\begin{align} \label{eq:MPPI-weights}
     w[i] = \frac{\exp(-\frac{J(u[i])}{\tau})}{\sum_{j=1}^N \exp(-\frac{J(u[j])}{\tau})}.
\end{align}
Combining the E-step and M-step yields the complete EM algorithm, which we refer to as \emph{Generalized MPPI}, summarized in \Cref{alg:EM-MPPI}. Note that although the E-step has the closed-form solution \eqref{eq:E-step-q-gibbs}, the posterior distribution is never explicitly computed.  Instead, it is implicitly used in the M-step through Monte Carlo sampling and weight computation.
Importantly, the algorithm naturally generalizes to any parametric distribution $p(\cdot;\theta)$ provided that samples can be efficiently drawn from $p(\cdot;\theta)$ and the weighted maximum likelihood problem \eqref{eq:wmle} can be efficiently solved.

\begin{algorithm}[t]
\caption{Generalized MPPI}
\label{alg:EM-MPPI}
\begin{algorithmic}[1]
    \Require: Initial parameter $\theta_0$, number of samples $N$
    \For{$k=0,...,$}
        \State Sample $u[1], ..., u[N] \overset{\text{i.i.d.}}{\sim} p(u;\theta_{k})$
        \State Compute weights using \eqref{eq:MPPI-weights}
        \State Obtain $\theta_{k+1}$ by solving \eqref{eq:wmle}
    \EndFor
\end{algorithmic}
\end{algorithm}

\subsection{Convergence}
The EM interpretation of MPPI induces the fixed-point iteration
\begin{align}\label{eq:EM-fixed-point}
    \theta_+ &= M(\theta) = \mathrm{argmax}_{\theta_+} Q(\theta_+; \theta) \\
    Q(\theta_+; \theta) &= \mathbb{E}_{u \sim q(u;\theta)}[\log p(u;\theta_+)]. \notag
\end{align}
Thus, MPPI can be analyzed through the convergence theory of EM-type algorithms.

\begin{assumption}\label{ass:EM}
    In this paper, we make the following assumptions
    \begin{enumerate}
        \item $\theta \in \Omega$ where $\Omega \subseteq \mathbb{R}^{n_p}$ is a closed set,
        \item $\forall u \text{ s.t. }  \|u\| < \infty, \ \lim_{\|\theta\| \rightarrow \infty} p(u;\theta) = 0$, 
        \revise{\item $\forall M > 0$, there exists a constant $C_M < \infty$ such that $p(u;\theta) \le C_M,
            \quad
            \forall \theta \in \Omega,\ \forall u \in \mathbb{R}^d
            \text{ with } \|u\| \le M,$}
        \item $p(u;\theta)$ is twice continuously differentiable in $\theta$,
        \item $\lim_{M \rightarrow \infty}  \inf \{J(u) : \|u\| \geq M\} = \infty$,
        \item \eqref{eq:EM-fixed-point} admits a maximizer for every $\theta$.
    \end{enumerate}
\end{assumption}

\revise{
Assumption~1 is mainly a regularity condition on the sampling family and the
trajectory cost. The coercivity condition on \(J\) (Assumption 1.5) is also natural, since large control sequences are either directly penalized
through control effort or lead to large terminal/running costs. The existence of the M-step maximizer (Assumption 1.6) is mild in the fixed-covariance Gaussian case, where the weighted maximum-likelihood problem has a closed-form solution (see \eqref{eq:gaussiam_mppi_closed_form_m_step}).  Importantly, Assumption~1 does not require differentiability of the dynamics
or of the rollout cost \(J\); it only requires smoothness of the chosen
sampling density \(p(u;\theta)\) with respect to its parameter. Thus, the
analysis remains compatible with nonsmooth or simulator-defined robotic tasks,
including navigation with obstacle penalties or contact-rich interactions.
}

\begin{theorem}[Convergence of EM-MPPI] \label{thm:global-convergence}
Let \Cref{ass:EM} hold and $\{\theta_k\}$ be the sequence generated by EM-MPPI. Then the following properties hold:
\begin{enumerate}
    \item $\ell(\theta_{k+1}) > \ell(\theta_k) \ \forall \theta_k$ that is not a stationary point,
    \item the sequence $\{\ell(\theta_k)\}$ converges,
    \item every limit point $\theta^*$ of $\{\theta_k\}$ is a stationary point of the log-likelihood, i.e., $\nabla_\theta \ell(\theta^*) = 0 $,
\end{enumerate}
\end{theorem}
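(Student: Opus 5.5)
The plan is to follow the classical EM convergence theory of Dempster--Laird--Rubin and Wu, applied to the exact (population) iteration \eqref{eq:EM-fixed-point}. The starting point is the standard decomposition of the log-likelihood. Write $Z(\theta)=\int p(u;\theta)e^{-J(u)/\tau}du$. For any $\theta'$, the posterior satisfies $q(u;\theta')=p(u;\theta')e^{-J(u)/\tau}/Z(\theta')$. Taking logs and integrating against $q(\cdot;\theta)$ gives
\begin{align*}
\ell(\theta') = Q(\theta';\theta) - \tau^{-1}\mathbb{E}_{q(\cdot;\theta)}[J] - H(\theta';\theta) + \text{const},
\end{align*}
where $H(\theta';\theta)=\mathbb{E}_{u\sim q(u;\theta)}[\log q(u;\theta')]$.

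\textbf{Part 1.} From the decomposition,
\[
\ell(\theta_{k+1})-\ell(\theta_k)=\big[Q(\theta_{k+1};\theta_k)-Q(\theta_k;\theta_k)\big]+KL\big(q(\cdot;\theta_k)\,\|\,q(\cdot;\theta_{k+1})\big).
\]
Both brackets are nonnegative, the first by the M-step definition (Assumption~1.6) and the second by Gibbs' inequality. For strict increase, I argue by contradiction. If $\ell(\theta_{k+1})=\ell(\theta_k)$, then both terms vanish. Then $\theta_k$ is itself a maximizer of $Q(\cdot;\theta_k)$, so $\nabla_{\theta'}Q(\theta';\theta_k)|_{\theta'=\theta_k}=0$ at an interior point. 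Next I use Fisher's identity:
\[
\nabla\ell(\theta)=\mathbb{E}_{q(\cdot;\theta)}[\nabla_\theta\log p(u;\theta)]=\nabla_{\theta'}Q(\theta';\theta)|_{\theta'=\theta}.
\]
This follows because $\nabla_{\theta'}H(\theta';\theta)|_{\theta'=\theta}=0$, as $H(\cdot;\theta)$ is maximized at $\theta$. The identity then forces $\nabla\ell(\theta_k)=0$, contradicting non-stationarity. Justifying the differentiation under the integral uses Assumption~1.4 together with a domination argument. Specifically, $e^{-J/\tau}$ decays by coercivity (Assumption~1.5), and $p$ is locally bounded (Assumption~1.3).

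\textbf{Part 2.} Since $P(\mathcal O=1\mid u)\le1$, we have $\ell(\theta)\le 0$. Hence $\{\ell(\theta_k)\}$ is monotone nondecreasing and bounded above, so it converges.

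\textbf{Part 3.} This is the main obstacle. It is Wu's theorem (via Zangwill's global convergence theorem), and it needs three ingredients.
\begin{enumerate}
\item[(a)] Continuity of $\ell$ on $\Omega$. Using Assumptions~1.3--1.5, I split $Z(\theta)$ into a ball $\|u\|\le M$, where $p\le C_M$ and dominated convergence applies, and a tail, where $e^{-J/\tau}$ is uniformly small.
\item[(b)] Compactness of the upper level sets $\{\theta\in\Omega:\ell(\theta)\ge\ell(\theta_0)\}$. These are closed by continuity, and I show they are bounded. For $\|\theta\|\to\infty$, $p(u;\theta)\to0$ pointwise (Assumption~1.2) and is dominated on balls, so the ball contribution to $Z$ vanishes. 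The tail contribution is small by coercivity, provided $p(\cdot;\theta)$ integrates to one. Hence $Z(\theta)\to0$ and $\ell(\theta)\to-\infty$, which gives boundedness. The sequence $\theta_k$ therefore stays in a compact set.
\item[(c)] Closedness of the point-to-set map $M$. This requires $Q(\theta';\theta)$ to be jointly continuous, which follows from the same dominated-convergence arguments as in (a).
\end{enumerate}

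With (a)--(c) in hand, the argument concludes as follows. Let $\theta_{k_j}\to\theta^*$. By compactness, pass to a further subsequence along which $\theta_{k_j+1}\to\bar\theta$. Closedness gives $\bar\theta\in M(\theta^*)$. Continuity and Part~2 give $\ell(\bar\theta)=\ell(\theta^*)$, since both equal the limit of $\ell(\theta_k)$. By Part~1 applied at $\theta^*$ (strict increase whenever $\theta^*$ is non-stationary), this forces $\nabla\ell(\theta^*)=0$.

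The delicate points are the uniform integrability needed for continuity and for the interchange of gradient and integral, and the implicit requirement that the maximizer lie in the interior of $\Omega$. I would state the latter explicitly as part of the stationarity notion if it is needed.
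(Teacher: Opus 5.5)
Your route is the paper's route. The boundedness of the superlevel set via the ball/tail split is exactly the argument in the paper's appendix: dominated convergence with $C_M e^{-J/\tau}$ on $\|u\|\le M$, plus the uniform tail bound $e^{-\underline{J}(M)/\tau}$ using $\int p\le 1$. Your Parts 1--3 unpack the Wu/Zangwill EM theorem, which the paper simply cites after asserting that $\ell$ is $C^1$ and $Q$ is jointly continuous.

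The genuine gap is the regularity of $Q$ that your Part 1 and your step (c) rely on. Part 1 uses Fisher's identity in the form $\nabla\ell(\theta)=\nabla_{\theta'}Q(\theta';\theta)|_{\theta'=\theta}$. You justify the interchange by Assumption~1.3 and the decay of $e^{-J/\tau}$, but these only bound $p$ from above. $Q(\theta';\theta)=\mathbb{E}_{q(\cdot;\theta)}[\log p(u;\theta')]$ involves $\log p(u;\theta')$ and the score $\nabla_{\theta'}\log p(u;\theta')$ off the diagonal. Assumption~1 gives no control of these; it does not even keep $p(\cdot;\theta')$ positive on the support of $q(\cdot;\theta)$. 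Without such control, the implication ``$\theta_k$ maximizes $Q(\cdot;\theta_k)$ $\Rightarrow$ $\nabla\ell(\theta_k)=0$'' is unavailable, and the statement itself can fail.

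Here is a counterexample. Take $d=n_p=1$, $\Omega=\mathbb{R}$, $J(u)=u^2$, and $p(u;\theta)=b(u-\theta)$, where $b$ is a $C^\infty$ density supported on $[-1,1]$ (e.g.\ $b(x)\propto e^{-1/(1-x^2)}$ on $(-1,1)$).
\begin{itemize}
\item All six items of Assumption~1 hold.
\item For $\theta'\neq\theta$, $q(\cdot;\theta)$ charges a set where $p(\cdot;\theta')=0$, so $Q(\theta';\theta)=-\infty$, while $Q(\theta;\theta)$ is finite. Hence $M(\theta)=\theta$ for every $\theta$, and the iteration never moves.
\item Yet $\ell'(\theta)=Z'(\theta)/Z(\theta)$ with $Z'(\theta)=-\tfrac{2}{\tau}\int b(x)(x+\theta)e^{-(x+\theta)^2/\tau}\,dx$. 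This is nonzero for $|\theta|>1$, because the integrand has a fixed sign there.
\item So Parts 1 and 3 fail starting from, say, $\theta_0=5$.
\end{itemize}
Since $M$ is the identity, it is closed. The failure therefore sits exactly in your passage from ``$\theta_k$ maximizes $Q(\cdot;\theta_k)$'' to $\nabla_{\theta'}Q(\theta_k;\theta_k)=\nabla\ell(\theta_k)=0$.

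The paper's proof has the same hole. What both need is an explicit extra hypothesis, for example: $q(u;\theta)\,|\log p(u;\theta')|$ and $q(u;\theta)\,\|\nabla_{\theta'}\log p(u;\theta')\|$ are dominated, locally uniformly in $(\theta',\theta)$, by an integrable function of $u$. This holds for fixed-covariance Gaussian MPPI: the score is affine in $u$, and $q(\cdot;\theta)\le e^{-J^*/\tau}p(\cdot;\theta)/Z(\theta)$ inherits the moments of $p$. With this hypothesis added, your argument is complete. Your interior-of-$\Omega$ caveat is likewise a genuine extra hypothesis that Wu's theorem requires.
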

\begin{proof}
    Under \Cref{ass:EM}, we can show that the super level set $S=\{\theta: \ell(\theta) \geq \ell(\theta_0) \}$ is compact, $\ell(\theta)$ is continuously differentiable, and that $Q(\theta_+,\theta)$ is continuous in both argument. Then the result follows directly from classical convergence results for the EM algorithm \cite{wu1983convergence, mclachlan2008algorithm}, since the proposed EM-MPPI update is an instance of an EM iteration. A more detailed proof of the compactness of $S$ will be available in the longer version.
\end{proof}
Furthermore, the EM-MPPI iteration enjoys a local linear convergence rate.
\begin{theorem}[Local linear convergence] \label{thm:local-linear-convergence}
Assume that \Cref{ass:EM} holds, and further suppose that
\begin{enumerate}
    \item $\nabla^2_{\theta}\ell(\theta^*) \prec 0$,
     \item the optimization problem \eqref{eq:EM-fixed-point} has a unique solution in a neighborhood of $\theta^*$,
     \item $\nabla^2_{11}Q(\theta^*, \theta^*)$ is nonsingular,
    \item $M(\theta)$ is continuously differentiable in a neighborhood of $\theta^*$,
\end{enumerate}
then there exists a neighborhood $\mathcal{N}$ of $\theta^*$ such that 
for any $\theta_0 \in \mathcal{N}$, the sequence 
$\{\theta_k\}$ generated by EM-MPPI converges to $\theta^*$. 
Moreover, the convergence is locally linear:
\begin{align*}
    \lim_{k \rightarrow \infty}
    \frac{\|\theta_{k+1} - \theta^*\|}{\|\theta_k - \theta^*\|}
    = \rho(\partial M(\theta^*)),
\end{align*}
where the Jacobian of the EM mapping is
\begin{align*}
\partial M(\theta^*)
&=
- \left(\nabla^2_{11} Q(\theta^*,\theta^*)\right)^{-1}
\nabla^2_{12} Q(\theta^*,\theta^*), \\
\nabla_{11}^2 Q(\theta^*,\theta^*)
&=
E_{q(u;\theta^*)}\!\left[\nabla_\theta^2 \log p(u;\theta^*)\right], \\
\nabla_{12}^2 Q(\theta^*,\theta^*)
&=
\mathrm{Cov}_{q(u;\theta^*)}
\left[\nabla_\theta \log p(u;\theta^*)\right],
\end{align*}
with $\rho(\partial M(\theta^*)) < 1$. Here $\rho(\partial M(\theta^*))$ denotes the spectral radius of the Jacobian of the EM mapping at $\theta^*$, \revise{and $\nabla_{ij}^2 Q$ denotes the second derivative of $Q$ with respect to its $i$-th and $j$-th arguments.}

\end{theorem}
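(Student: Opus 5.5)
The approach is to treat EM-MPPI as the fixed-point iteration $\theta_{k+1}=M(\theta_k)$ from \eqref{eq:EM-fixed-point}. We compute $\partial M(\theta^*)$ by the implicit function theorem, show that its spectral radius is below one using the missing-information identity for EM, and then invoke the standard local convergence theorem for fixed-point iterations (Ostrowski's theorem).

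\textbf{Step 1: $\theta^*$ is a fixed point and $\partial M(\theta^*)$ has the stated form.} Since $\theta^*$ is a stationary point, Fisher's identity gives
\begin{align*}
\nabla_\theta \ell(\theta)=\mathbb{E}_{q(u;\theta)}[\nabla_\theta\log p(u;\theta)]=\nabla_1 Q(\theta,\theta).
\end{align*}
This is obtained by differentiating \eqref{eq:log-likelihood} under the integral sign, which is justified by Assumption~1.4 and 1.3 together with the coercivity of $J$. Hence $\nabla_1Q(\theta^*,\theta^*)=0$. By the uniqueness hypothesis, $M(\theta^*)=\theta^*$.

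Locally, $M(\theta)$ is characterized by $\nabla_1 Q(M(\theta),\theta)=0$. Differentiating this identity in $\theta$, which is allowed because $M$ is $C^1$ and $\nabla^2_{11}Q(\theta^*,\theta^*)$ is nonsingular, gives
\begin{align*}
\nabla^2_{11}Q\,\partial M+\nabla^2_{12}Q=0.
\end{align*}
This yields the stated formula for $\partial M(\theta^*)$. Computing $\nabla^2_{12}Q$ means differentiating $\int q(u;\theta)\nabla\log p(u;\theta_+)\,du$ in $\theta$. Using $\nabla_\theta q=q\,(\nabla\log p(u;\theta)-\nabla\ell(\theta))$ and evaluating at $\theta_+=\theta=\theta^*$ gives
\begin{align*}
\mathbb{E}_q\big[\nabla\log p\,(\nabla\log p-\mathbb{E}_q\nabla\log p)^\top\big]=\mathrm{Cov}_q[\nabla\log p],
\end{align*}
where the equality uses $\mathbb{E}_q\nabla\log p=\nabla\ell(\theta^*)=0$.

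\textbf{Step 2: $\rho(\partial M(\theta^*))<1$.} This is the main obstacle. Differentiating Fisher's identity once more gives the decomposition
\begin{align*}
\nabla^2\ell(\theta^*)=\nabla^2_{11}Q+\nabla^2_{12}Q=:-A+B.
\end{align*}
Here $B=\mathrm{Cov}_q[\nabla\log p]\succeq 0$, and $A=-\nabla_{11}^2Q(\theta^*,\theta^*)$. Because $\theta^*=M(\theta^*)$ maximizes $Q(\cdot,\theta^*)$, we have $A\succeq 0$, and nonsingularity upgrades this to $A\succ0$. The hypothesis $\nabla^2\ell(\theta^*)\prec0$ gives $A-B\succ 0$.

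Now $\partial M=A^{-1}B$ is similar to the symmetric matrix $A^{-1/2}BA^{-1/2}$, so its eigenvalues are real and lie in $[0,\infty)$. From $0\preceq B\prec A$ we get $0\preceq A^{-1/2}BA^{-1/2}\prec I$, hence $\rho(\partial M(\theta^*))<1$. The care needed here is in justifying the two interchanges of differentiation and integration; I would handle these by dominated convergence using the local boundedness of $p$ and its derivatives in Assumption~1.3–1.4, together with the integrability of $e^{-J/\tau}$ guaranteed by coercivity.

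\textbf{Step 3: local linear convergence.} $M$ is $C^1$ near $\theta^*$ with $\rho(\partial M(\theta^*))<1$. So for any $\epsilon>0$ there is an induced norm with $\|\partial M(\theta^*)\|\le\rho+\epsilon<1$, and by continuity $M$ is a contraction in that norm on a small ball $\mathcal N$. It follows that $\theta_k\to\theta^*$ for every $\theta_0\in\mathcal N$. Writing
\begin{align*}
\theta_{k+1}-\theta^*=\partial M(\theta^*)(\theta_k-\theta^*)+o(\|\theta_k-\theta^*\|)
\end{align*}
gives the asymptotic rate. The stated limit equals $\rho$ as in the classical EM rate results \cite{dempster1977maximum,mclachlan2008algorithm}; strictly, it holds generically, namely when the error aligns with the dominant eigendirection. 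I would cite those results for the limit statement, since $\partial M$ is diagonalizable with nonnegative eigenvalues.
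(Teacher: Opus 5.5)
Your proposal is correct and follows essentially the paper's own argument. The shared steps are: implicit differentiation of $\nabla_1 Q(M(\theta),\theta)=0$; the posterior-score identity giving $\nabla^2_{12}Q(\theta^*,\theta^*)=\mathrm{Cov}_{q(u;\theta^*)}[\nabla_\theta\log p(u;\theta^*)]$ and $\nabla^2_\theta\ell=\nabla^2_{11}Q+\nabla^2_{12}Q$; similarity of $\partial M(\theta^*)$ to $0\preceq A^{-1/2}BA^{-1/2}\prec I$; and then Ostrowski's theorem plus the classical EM rate results. Your caveat that the ratio limit holds only generically is justified and goes beyond the paper, which only cites. For example, fixed-covariance Gaussian MPPI with $J(u)=\tfrac12 u^\top P u$ gives the linear EM map $\mu\mapsto(\Sigma^{-1}+P/\tau)^{-1}\Sigma^{-1}\mu$; when this matrix has distinct eigenvalues, an initial error along a non-dominant eigenvector produces a ratio strictly below $\rho(\partial M(\theta^*))$, so the stated limit cannot hold for every $\theta_0\in\mathcal N$.
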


The local convergence result follows directly from \cite[Sec~3.9]{mclachlan2008algorithm} and Ostrowski Theorem \cite[Thm 10.1.3]{ortega2000iterative}. A more detailed proof will be available in the longer version.

\revise{\begin{remark}[Interpretation of the local assumptions]
The additional assumptions in Theorem~2 are local nondegeneracy conditions for
the EM fixed-point map that describe the behavior of EM-MPPI once the iterates enter a neighborhood of a locally
attracting solution. In the fixed-covariance Gaussian case, the M-step is locally unique because the weighted maximum-likelihood problem is strictly concave in the mean parameter. The nonsingularity of
\(\nabla^2_{11}Q(\theta^*,\theta^*)\) rules out degenerate local curvature, while differentiability of \(M\) holds locally when the weighted moments vary smoothly with the current proposal parameter. Thus, Theorem~1 gives a global
monotonic-improvement interpretation under broad regularity conditions, whereas
Theorem~2 characterizes the local linear rate near a nondegenerate fixed point.
\end{remark}}

\section{EM-MPPI for Exponential Family} \label{sec:MPPI-exp-family}

A particularly important case is when $p(u;\theta)$ belongs to the exponential family, i.e.,
\begin{align*}
    p(u;\theta) = h(u)\exp \big( \eta(\theta)^\top T(u) - A(\eta(\theta))\big),
\end{align*}
where $h(u)$ is the base density, $\eta(\theta)$ is the natural parameter, $T(u)$ denotes the sufficient statistics, and $A(\eta)$ is the log-partition function, which is convex in $\eta$.

\revise{Assuming the mapping between $\theta$ and natural parameter $\eta$ is invertible, instead of studying convergence in $\theta$, we can analyze convergence in the natural parameter $\eta$.} Two useful identities for exponential families are
\begin{align}\label{eq:exp-family-log-partition}
    \nabla_\eta A(\eta) \!=\! \mathbb{E}_{p(u;\eta)}[T(u)],  \ 
    \nabla^2_\eta  A(\eta) \!=\! \mathrm{Cov}_{p(u;\eta)}[T(u)].
\end{align}
Note that when $p(u;\theta)$ belongs to the exponential family, the joint distribution $p(u,\mathcal{O}=1;\theta)$ also has an exponential-family form with a modified base density. This observation significantly simplifies the analysis.
The EM iteration \eqref{eq:EM-fixed-point} can be written as
\begin{align}\label{eq:exp-family-m-step}
    \eta_+ &= \mathrm{argmax}_{\eta_+} \mathbb{E}_{q(u;\eta)}[\eta_+^\top T(u) - A(\eta_+)] \\
    q(u;\eta) \!
    &= \!p(u|\mathcal{O}\!=\!1;\eta) 
    \!\propto \!
    h(u)\exp\big(
        \eta^\top T(u) - A(\eta) - \frac{J(u)}{\tau} 
    \big)\notag
\end{align}
where $q(u;\eta)$ is the distribution obtained from the E-step.
Since $A(\eta)$ is convex in $\eta$, the optimization problem \eqref{eq:exp-family-m-step} is convex. Combining the optimality condition with \eqref{eq:exp-family-log-partition} yields
\begin{align}\label{eq:exp-family-m-step-moment-matching}
    \mathbb{E}_{p(u;\eta_+)}[T(u)] = \nabla_{\eta} A(\eta_+) = \mathbb{E}_{q(u;\eta)}[T(u)],
\end{align}
which shows that the M-step corresponds to moment matching in the sufficient statistics $T(u)$.

\begin{theorem}[Sufficient Increase in Exponential Family]\label{thm:exp-family-sufficient-increase}
Assume $A(\eta)$ is $\alpha$-strongly convex in $\eta$. Then the EM iteration satisfies
\begin{align*}
    \ell(\eta_+) - \ell(\eta) \geq \frac{\alpha}{2}\|\eta_+ - \eta\|_2^2.
\end{align*}
\end{theorem}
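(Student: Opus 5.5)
We will use the standard EM decomposition of the log-likelihood and make the exponential-family structure explicit through the moment-matching condition \eqref{eq:exp-family-m-step-moment-matching}. Write $\ell(\eta') = Q(\eta';\eta) + H(\eta';\eta)$. Here
\begin{align*}
Q(\eta';\eta) &= \mathbb{E}_{q(u;\eta)}[\log p(u;\eta')] - \mathbb{E}_{q(u;\eta)}[J(u)]/\tau,\\
H(\eta';\eta) &= -\mathbb{E}_{q(u;\eta)}[\log q(u;\eta')],
\end{align*}
up to the constant in \eqref{eq:log-likelihood}. This is the identity $\log p(\mathcal{O}=1;\eta') = \log p(u,\mathcal{O}=1;\eta') - \log q(u;\eta')$, averaged over $u\sim q(u;\eta)$. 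Gibbs' inequality gives $H(\eta_+;\eta)-H(\eta;\eta) = KL(q(\cdot;\eta)\,\|\,q(\cdot;\eta_+)) \ge 0$. Therefore
\[
\ell(\eta_+)-\ell(\eta) \ge Q(\eta_+;\eta)-Q(\eta;\eta),
\]
and it suffices to lower bound the increase of $Q$ in its first argument.

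Next we compute the $Q$-increase using the exponential-family form. Dropping the $h(u)$ and $J$ terms, which do not depend on the first argument, we get
\begin{align*}
Q(\eta_+;\eta)-Q(\eta;\eta) &= (\eta_+-\eta)^\top \bar T - A(\eta_+) + A(\eta),\\
\bar T &:= \mathbb{E}_{q(u;\eta)}[T(u)].
\end{align*}
By \eqref{eq:exp-family-m-step-moment-matching}, $\bar T = \nabla A(\eta_+)$. Substituting, the increase equals
\[
A(\eta) - A(\eta_+) - \nabla A(\eta_+)^\top(\eta-\eta_+),
\]
which is the Bregman divergence of $A$ between $\eta$ and $\eta_+$. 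By $\alpha$-strong convexity of $A$ this is at least $\frac{\alpha}{2}\|\eta-\eta_+\|_2^2$, which gives the claim. The point is that $\eta_+$ maximizes the $\alpha$-strongly concave function $Q(\cdot;\eta)$. Any other point, including the current iterate $\eta$, therefore lies below the maximum by at least $\frac{\alpha}{2}$ times the squared distance.

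The main obstacle is technical rather than conceptual. We must justify that all expectations are finite and that the entropy-like term $H$ is well defined, so that the KL step is legitimate. This requires $\int p(u;\eta)e^{-J(u)/\tau}du$ to be finite and positive, and $T$ to be integrable under $q(u;\eta)$, so that $\bar T$ exists. We would also need the M-step maximizer to be characterized by the first-order condition. That condition is sufficient because $A$ is strictly convex, and existence of a maximizer is granted by Assumption~1.6. Given these facts, the inequality chain is direct.
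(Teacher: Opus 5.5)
Your proof is correct and follows essentially the same route as the paper: your $Q+H$ decomposition with Gibbs' inequality is equivalent to the paper's ELBO step $\ell(\eta_+) = g(\eta_+,\eta_+) \ge g(\eta_+,\eta)$, since the gap in that inequality is exactly $KL(q(\cdot;\eta)\,\|\,q(\cdot;\eta_+))$. From there, both arguments use moment matching to reduce the increase to the Bregman divergence $A(\eta)-A(\eta_+)-\nabla A(\eta_+)^\top(\eta-\eta_+)$ and bound it by $\alpha$-strong convexity.
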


\begin{proof}
See appendix for the proof.
\end{proof}

Note that this sufficient increase property also holds when \Cref{alg:EM-MPPI} performs only a single update per iteration, which is commonly done in practice.

\revise{
\begin{remark}[Monte Carlo approximation]
Theorem~\ref{thm:exp-family-sufficient-increase} is stated for the population EM update. In Algorithm~1, the exact
posterior moment \(\mathbb E_{q(u;\eta)}[T(u)]\) is replaced by its self-normalized estimate.
Repeating the proof of Theorem~\ref{thm:exp-family-sufficient-increase} shows that the sufficient-increase bound
holds up to an additional term proportional to the estimation error in this
moment. Under finite-second-moment assumptions, this error is
\(O(N^{-1/2})\). Thus, the deterministic guarantee is recovered in the
large-sample limit, while finite-sample monotonicity is approximate.
\end{remark}
}

\section{Special Cases}

\subsection{Gaussian MPPI}

In this section, we study the convergence of the commonly used Gaussian MPPI and derive an explicit convergence result using the results from previous sections. In Gaussian MPPI, the control distribution is
\begin{align*}
    p(u;\theta) = \frac{1}{(2\pi)^{n/2} |\Sigma|^{1/2}} 
    \exp \big(
        -\frac{1}{2}
        (u-\mu)^\top \Sigma^{-1} (u-\mu)
    \big),
\end{align*}
where $\theta=\{\mu\}$ is the parameter and $\Sigma$ is a fixed covariance matrix. 
\revise{ In this case, $p(u;\theta)$ is smooth in $\mu$, locally bounded on compact sets of controls, and satisfies \(p(u;\theta)\to0\) for each fixed \(u\) as \(\|\theta\|\to\infty\).}
The weighted log-likelihood estimation \eqref{eq:wmle} admits a closed-form solution
\begin{align}\label{eq:gaussiam_mppi_closed_form_m_step}
    \mu_+ = \sum_{i=1}^N w_i\, u[i],
\end{align}
which reduces exactly to the standard MPPI update.
Using the result from \Cref{thm:local-linear-convergence}, the Jacobian of the EM mapping at a stationary point is
\(
    \partial M(\theta^*) = \mathrm{Cov}_{q(u;\theta^*)}[u] \Sigma^{-1},
\)
which characterizes the local convergence rate. This shows that the local convergence is governed by the product of the covariance of the importance-weighted distribution and the inverse exploration covariance.

Next we study the global sufficient increase property of Gaussian MPPI. A Gaussian distribution with fixed covariance can be written in exponential-family form as
\begin{align*}
    h(u) &= \exp\!\left(-\tfrac12 u^\top \Sigma^{-1}u\right),
    \quad \eta(\theta) = \Sigma^{-1}\mu, \\
    T(u) &= u, \quad A(\eta) = \tfrac12 \eta^\top \Sigma \eta 
              + \tfrac12 \log\det(2\pi\Sigma).
\end{align*}
\revise{The mapping between $\mu$ and $\eta$ is therefore invertible.} Moreover, the log-partition function $A(\eta)$ is $\alpha$-strongly convex with $\alpha=\lambda_{\min}(\Sigma)$ where $\lambda_{\min}(\Sigma)$ is the minimum eigenvalue of $\Sigma$.
Applying \Cref{thm:exp-family-sufficient-increase}, Gaussian MPPI satisfies the global sufficient increase
\begin{align}\label{eq:gaussian-mppi-sufficient-increase}
    \ell(\eta_+) - \ell(\eta) 
    \geq 
    \frac{\lambda_{\min}(\Sigma)}{2}
    \|\eta_+ - \eta\|_2^2.
\end{align}

\subsection{Mixture of Gaussian (MoG) MPPI}

A well-known limitation of Gaussian MPPI is its inability to represent multi-modal distributions. For example, in a car navigation scenario with obstacle avoidance, going left and going right around an obstacle may both be valid choices, while their average trajectory may collide with the obstacle. 
Motivated by this issue, we consider a mixture of Gaussian distributions
\begin{align*}
    p(u;\theta) = 
    \sum_{l=1}^L \pi_l \mathcal{N}(u; \mu_l, \Sigma),
    \qquad 
    \sum_{l=1}^L \pi_l = 1,
\end{align*}
where $L$ is the number of Gaussian components and $\theta = \{\pi_1,\mu_1,\dots,\pi_L,\mu_L\}$ denotes the parameters. For simplicity, we assume that the covariance matrices are fixed and identical across components.
In this case, the weighted maximum likelihood estimation \eqref{eq:wmle} can be solved using an inner EM algorithm with the following update rule:
\begin{align*}
    r_{il}
    &=
    \frac{
        \pi_l \mathcal{N}(u[i];\mu_l,\Sigma)
    }{
        \sum_{k=1}^L \pi_k \mathcal{N}(u[i];\mu_k,\Sigma)
    } \\
    \pi_l^+
    &= 
    \frac{\sum_{i=1}^N w_i r_{il}}{\sum_{i=1}^N w_i}, \quad 
    \mu_l^+
    =
    \frac{\sum_{i=1}^N w_i r_{il} u[i]}
    {\sum_{i=1}^N w_i r_{il}}.
\end{align*}

Mixture-of-Gaussian control distributions for MPPI have also been explored in prior work on variational-inference control \cite{wang2021variational}. This formulation naturally fits into the EM-MPPI framework, where the E-step computes trajectory weights and the M-step updates the mixture parameters.


\section{Experiments}
In this section, we consider a Dubins car navigation problem with obstacle avoidance
\begin{align*}
    \min_{U} \;& \min \{\|x_h - d\|_Q^2 + \|u_h\|_R^2 : h=1,...,H\}\\
    \text{s.t.}\;& x_{h+1} = f(x_h, u_h), \quad \forall h=1,\dots,H,
\end{align*}
where $Q = \mathrm{diag}(1,1,0.01)$, $R=0.001$, and $x_h = [\mathtt{x}_h, \mathtt{y}_h, \mathtt{\theta}_h]$ contains the planar position and orientation, and $d$ is that goal state. The control input is $u_h = [\omega_h]$, representing the angular velocity, with actuation bound $\omega_h \in [-\frac{3}{2}\pi, \frac{3}{2}\pi]$. The system dynamics are
\[
    x_{h+1}
    = x_h + \delta t [ v \cos(\mathtt{\theta}_h), v \sin(\mathtt{\theta}_h), \omega_h]^\top,
\]
where $v$ is a constant forward velocity and $\delta t$ is the Euler integration step. We sample at least 1024 feasible trajectories for each update.

\subsection{Sufficient Increase of Gaussian MPPI}
We first verify the sufficient increase property of Gaussian MPPI in \eqref{eq:gaussian-mppi-sufficient-increase} by considering a simple navigation problem without obstacles. The car starts from the state $x_1 = [0,0,\pi/2]$ and the target is $d=[2,1,0]$. We run \Cref{alg:EM-MPPI} for 20 iterations. In \Cref{fig:car-simple}, we show the sampled trajectories at different iterations. We also plot the function
\(
    F(\eta_k) = \log ( \tfrac{1}{N} \sum_{i=1}^N \exp (-\tfrac{J(u[i])}{\tau})),
\)
which provides a Monte Carlo estimate of the likelihood (up to a constant), and
$
    F(\eta_k) + \frac{\lambda_{\min}(\Sigma)}{2} 
    \|\eta_{k+1} - \eta_k\|_2^2,
$
which is the lower bound on $F(\eta_{k+1})$ implied by \eqref{eq:gaussian-mppi-sufficient-increase}.

\begin{figure}[t]
\centering
\begin{subfigure}[t]{0.48\linewidth}
    \centering
    \includegraphics[width=\linewidth,height=3.4cm,keepaspectratio]{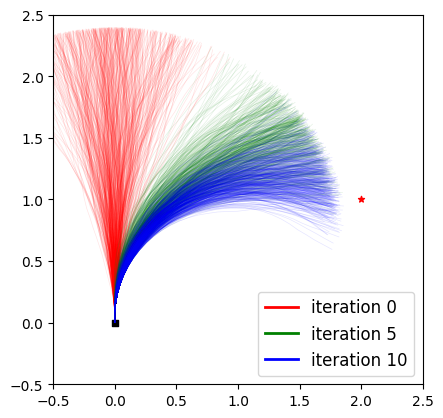}
    \caption{Sampled trajectories at different iterations}
    \label{fig:car_simple_trajs}
\end{subfigure}
\hfill
\begin{subfigure}[t]{0.49\linewidth}
    \centering
    \includegraphics[width=\linewidth,height=3.4cm,keepaspectratio]{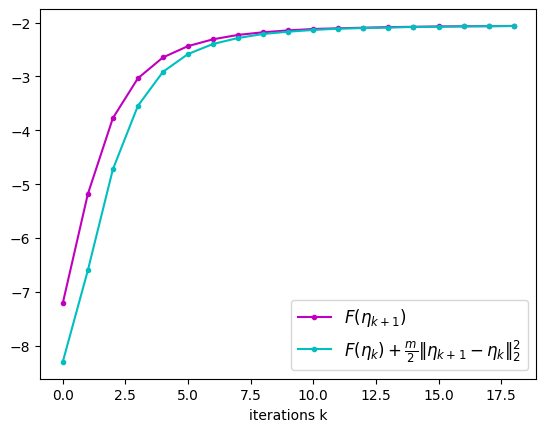}
    \caption{Estimate of log-likelihood}
    \label{fig:car-simple-convergence}
\end{subfigure}
\vspace{1mm}
\caption{
  Convergence of Gaussian MPPI on a simple Dubins-car navigation problem.
  }

\label{fig:car-simple}
\end{figure}

\subsection{Cluttered Environment Navigation}

Next, we consider a more cluttered environment where the car must reach the goal while avoiding multiple obstacles. We compare Gaussian MPPI and MoG MPPI with two Gaussian components. For both methods, we perform either 1 or 5 iterations of \Cref{alg:EM-MPPI} before executing each action. 
It can be seen in \Cref{fig:car-cluttered} that, as the car approaches an obstacle, the feasible trajectories split into two modes corresponding to passing the obstacle from the left or the right. MoG MPPI naturally captures these modes through its mixture distribution. In contrast, Gaussian MPPI approximates the bimodal distribution with a single Gaussian and therefore averages the two directions. As a result, the mean control tends to keep the car moving straight toward the obstacle, producing many infeasible samples. \Cref{table:car-cluttered-stats} shows the total number of sampled trajectories and the fraction of rejected trajectories.

\begin{figure}[t]
  \begin{subfigure}[t]{0.23\textwidth}
    \includegraphics[scale=0.38]{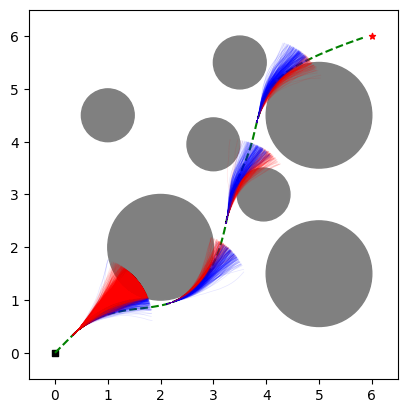}
    \caption{Gaussian, 1 inner iteration} \label{fig:car_cluttered_gaussian_1}
  \end{subfigure}
  \hspace{1mm}
  \begin{subfigure}[t]{0.23\textwidth}
    \includegraphics[scale=0.38]{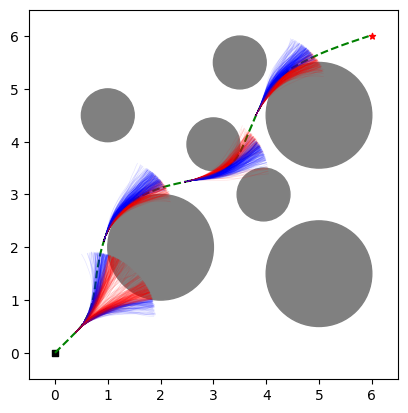}
    \caption{MoG, 1 inner iteration} \label{fig:car_cluttered_mog_1}
  \end{subfigure}\hfill

  \begin{subfigure}[t]{0.23\textwidth}
    \includegraphics[scale=0.38]{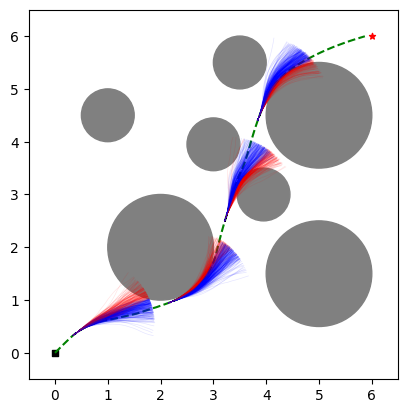}
    \caption{Gaussian, 5 inner iteration} \label{fig:car_cluttered_gaussian_5}
  \end{subfigure}
  \hspace{1mm}
  \begin{subfigure}[t]{0.23\textwidth}
    \includegraphics[scale=0.38]{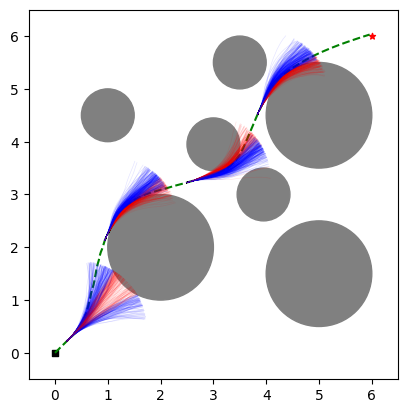}
    \caption{MoG, 5 inner iteration} \label{fig:car_cluttered_mog_5}
  \end{subfigure}\hfill
  \caption{Behavior of Gaussian MPPI and MoG MPPI in a cluttered environment.}
  \label{fig:car-cluttered}
\end{figure}

\begin{table}[t]
\centering
\begin{tabular}{ccccc}
\hline
     & Gaussian,1 & Gaussian,5 & MoG,1 & MoG,5 \\
\hline
Total Samples   & 268288   & 1083392     & \textbf{239616}    & 1092608    \\
Samples Rejected        & 56.11\%   & \textbf{45.65} \%  & 50.85\%   & 46.11\%   \\
\hline
\end{tabular}
\vspace{2mm}
\caption{Total number of sampled trajectories and rejected by MPPI with different distributions and number of iterations}
\label{table:car-cluttered-stats}
\end{table}

\section{Conclusion}


We reinterpret Model Predictive Path Integral (MPPI) control as an
Expectation--Maximization (EM) algorithm, showing that repeated MPPI updates
perform likelihood ascent on a latent-variable optimality model. This view
clarifies the connection between Boltzmann reweighting and weighted
maximum-likelihood projection, enables generalized proposal families beyond
Gaussians, and yields convergence, local-rate, and sufficient-increase results
through classical EM theory. \revise{ Our analysis focuses on the exact expectation-based iteration; finite-sample Monte Carlo effects and receding-horizon closed-loop behavior remain important directions for future work.}

\printbibliography

\newpage
\section{Appendix}
\appendix
\subsection{Proof of \Cref{thm:global-convergence}}
Let
\[
\hat\ell(\theta)
:=
\int p(u;\theta)\exp(-J(u)/\tau)\,du .
\]
We show that \(\hat\ell(\theta)\to 0\) as \(\|\theta\|\to\infty\). Fix
\(M>0\) and decompose
\[
\hat\ell(\theta) \! = \!
\int_{\|u\|\le M}p(u;\theta)e^{-J(u)/\tau}\,du
\!+\!
\int_{\|u\|>M}p(u;\theta)e^{-J(u)/\tau}\,du .
\]
Let
\[
\underline{J}(M):=\inf_{\|u\|\ge M}J(u).
\]
By Assumption~1, \(\underline{J}(M)\to\infty\) as \(M\to\infty\). Therefore,
uniformly in \(\theta\),
\begin{align*}
\int_{\|u\|>M}p(u;\theta)e^{-J(u)/\tau}\,du 
\le&
e^{-\underline{J}(M)/\tau}
\int_{\|u\|>M}p(u;\theta)\,du \\
\le&
e^{-\underline{J}(M)/\tau}.
\end{align*}
Thus the tail term can be made arbitrarily small by choosing \(M\)
sufficiently large.

For the compact-region term, fix \(M\). By Assumption~1.2,
\(p(u;\theta)\to 0\) pointwise for every fixed \(u\) as
\(\|\theta\|\to\infty\). Moreover, by Assumption 1.3,
there exists \(C_M<\infty\) such that
\[
p(u;\theta)e^{-J(u)/\tau}
\le
C_M e^{-J(u)/\tau},
\qquad \|u\|\le M .
\]
Since \(J\) is bounded below, the right-hand side is integrable on
\(\{u:\|u\|\le M\}\). Hence, the dominated convergence theorem gives
\[
\lim_{\|\theta\|\to\infty}
\int_{\|u\|\le M}p(u;\theta)e^{-J(u)/\tau}\,du
=0.
\]
Now let \(\varepsilon>0\). Choose \(M\) such that
\(e^{-\underline{J}(M)/\tau}<\varepsilon/2\), and then choose \(\|\theta\|\) large
enough so that the compact-region integral is less than
\(\varepsilon/2\). Therefore,
\(
\lim_{\|\theta\|\to\infty}\hat\ell(\theta)=0.
\)
Since \(\ell(\theta)=\log\hat\ell(\theta)+\mathrm{const}\), it follows that
\(
\lim_{\|\theta\|\to\infty}\ell(\theta)=-\infty.
\)
Thus the superlevel set
\(
S=\{\theta\in\Omega:\ell(\theta)\ge \ell(\theta_0)\}
\)
is bounded. Since \(\ell\) is continuous and \(\Omega\) is closed, \(S\)
is closed. Hence \(S\) is compact.

\begin{figure*}[t]
    \centering
    \setlength{\tabcolsep}{1pt} 
    \newcommand{\imgcell}[1]{%
        $\vcenter{\hbox{\includegraphics[width=0.25\textwidth]{#1}}}$%
    }
    \begin{tabular}{c c c c}
        \textbf{Iter 10} & \textbf{Iter 12} & \textbf{Iter 14} & \textbf{Iter 16}\\


        \imgcell{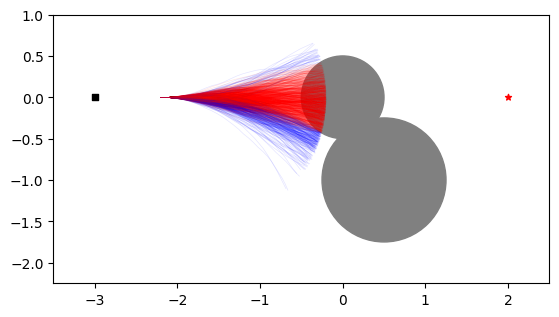} &
        \imgcell{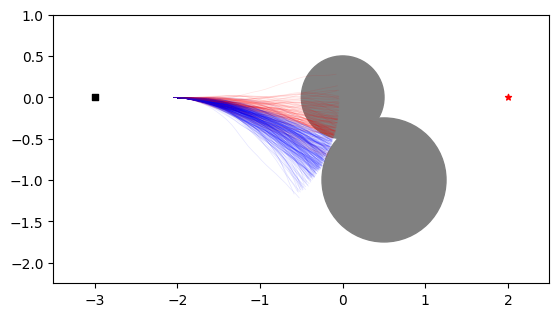} &
        \imgcell{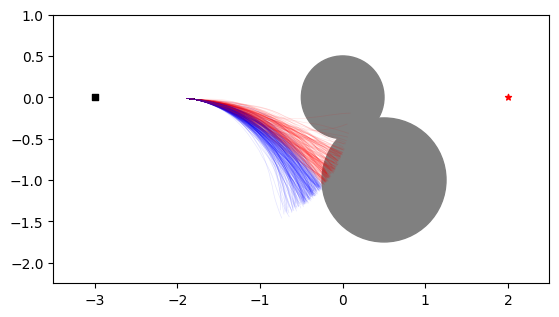} &
        \imgcell{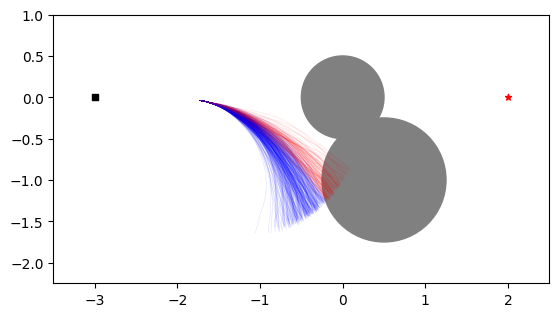} \\

        \imgcell{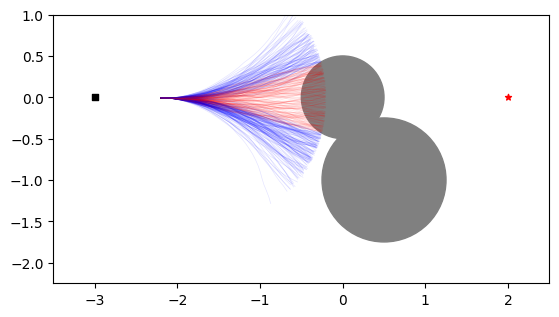} &
        \imgcell{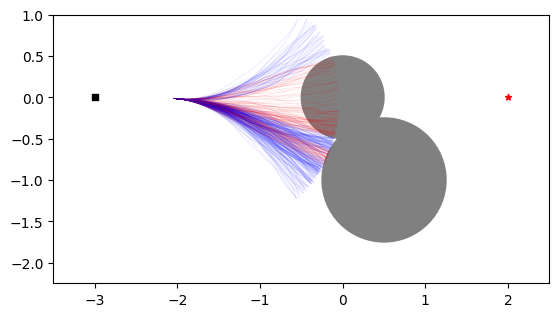} &
        \imgcell{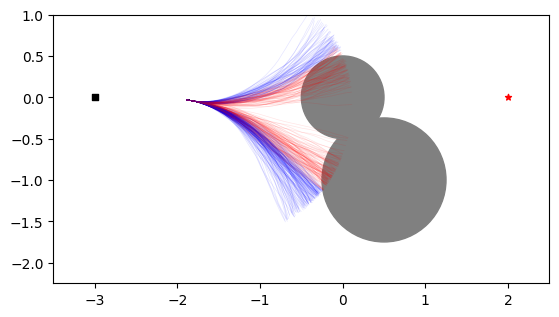} &
        \imgcell{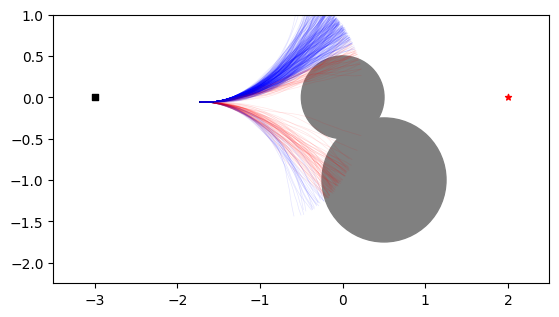} \\
        
    \end{tabular}

    \caption{
        Comparison between Gaussian-MPPI and MoG-MPPI in a cluttered navigation task. The top row shows Gaussian-MPPI, while the bottom row shows MoG-MPPI. At iteration 10, Gaussian-MPPI has already committed to the right-turning mode, whereas MoG-MPPI represents both left- and right-turning modes. Although both methods initially favor the right route, the larger obstacle makes this strategy less favorable at later iterations. MoG-MPPI preserves the left-turning mode, increases its weight, and switches to the left route by iteration 16, while Gaussian MPPI remains concentrated around the initially selected right-turning mode.
    }

    \label{fig:mog}
    \end{figure*}

\subsection{Proof of \Cref{thm:local-linear-convergence}}
\label{app:local_conv}
\begin{proof}
We start with the definition 
\begin{align}\label{eq:EM-fixed-point-appendix}
    \theta_+ &= M(\theta) = \mathrm{argmax}_{\theta_+} Q(\theta_+; \theta) \\
    Q(\theta_+; \theta) &= E_{u \sim q(u;\theta)}[\log p(u;\theta_+)]. \notag
\end{align}
from \eqref{eq:EM-fixed-point}. 

We first derive the Hessian $\nabla^2_{11} Q(\theta_+, \theta)$ and evaluate at $(\theta_+, \theta) = (\theta^*, \theta^*)$.
\begin{align*}
    \nabla^2_{11} Q(\theta_+, \theta) 
    &= \mathbb{E}_{u \sim q(u;\theta)}[\nabla^2_{\theta_+} \log p(u;\theta_+)] \notag \\
    \nabla^2_{11} Q(\theta^*, \theta^*) 
    &= \mathbb{E}_{u \sim q(u;\theta^*)}[\nabla^2_{\theta} \log p(u;\theta^*)].
\end{align*}
We then derive the Hessian $\nabla^2_{12} Q$. 
{\small
\begin{align}\label{eq:hessian-Q12}
    &\nabla^2_{12} Q(\theta_+, \theta) \notag\\
    =& \partial_{\theta} E_{u \sim q(u;\theta)}[
        \nabla_{\theta_+} \log p(u;\theta_+)
    ] \notag\\
    =& \mathbb{E}_{u \sim q(u;\theta)}[
        \nabla_{\theta_+} \log p(u;\theta_+)
        \nabla_{\theta}\log q(u;\theta)^\top
    ] \notag\\
    =& \mathbb{E}_{u \sim q(u;\theta)}[
        \nabla_{\theta_+} \log p(u;\theta_+)
        \nabla_{\theta}\log p(u \mid \mathcal{O}=1;\theta)^\top
    ].
\end{align}
}
In the last step, we use the result $q(u;\theta) = p(u\mid \mathcal{O}=1;\theta)$ form E-step \eqref{eq:E-step-q-gibbs}, and we will use these two interchangeably. To evaluate $\nabla^2_{12}Q(\theta_+, \theta)$ at $(\theta_+, \theta) = (\theta^*, \theta^*)$, we need to use the following intermediate results
\begin{align*}
    &\mathbb{E}_{u \sim p(u\mid \mathcal{O}=1;\theta^*)}[
        \nabla_{\theta} \log p(\mathcal{O}=1,u;\theta^*)
    ] \\
    =& \int p(u\mid \mathcal{O}=1;\theta^*) 
        \frac{
            \nabla_{\theta} p(\mathcal{O}=1,u;\theta^*)
        }{
            p(\mathcal{O}=1,u;\theta^*)
        }
        du \\
    =& \frac{1}{p(\mathcal{O}=1;\theta)} \nabla_{\theta}
        \int p(\mathcal{O}=1,u;\theta^*) du \\
    =& \nabla_{\theta} \log p(\mathcal{O}=1;\theta^*).
\end{align*}
Let $S$ be some random variable with mean $\mu$, we have 
\begin{align*}
    \mathbb{E}[S(S-\mu)^\top] = \mathrm{Cov}[S].
\end{align*}
Let 
\begin{align*}
    S &= \nabla_{\theta} \log p(\mathcal{O}=1, u;\theta^*), \\
    \mu &= \nabla_\theta \log p(\mathcal{O}=1;\theta^*),
\end{align*}
and using the fact that 
\begin{align*}
    &\nabla_\theta \log p(u,\mathcal{O}=1;\theta) \\
    =& \nabla_\theta \log [p(u;\theta) \exp(-\frac{J(u) - J^*}{\tau})] \\
    =&  \nabla_\theta \log p(u;\theta),
\end{align*}
following \eqref{eq:hessian-Q12}, we have 
\begin{align*}
    &\nabla^2_{12} Q(\theta^*, \theta^*) \\
    =& \mathbb{E}_{u \sim q(u;\theta^*)}[
        \nabla_{\theta^*} \log p(u;\theta^*)
        \nabla_{\theta}\log p(u \mid \mathcal{O}=1;\theta^*)^\top
    ] \\
    =& \mathbb{E}_{u \sim q(u;\theta^*)}[S (S - \mu)^\top] \\
    =& \mathrm{Cov}_{u \sim q(u;\theta^*)}[\nabla_\theta \log p(u;\theta^*)].
\end{align*}
Thus
\begin{align*}
    \nabla^2_{11} Q(\theta^*, \theta^*) 
    &= \mathbb{E}_{u \sim q(u;\theta^*)}[\nabla^2_{\theta} \log p(u;\theta^*)], \\
    \nabla^2_{12} Q(\theta^*, \theta^*)
    &= \mathrm{Cov}_{u \sim q(u;\theta^*)}[\nabla_\theta \log p(u;\theta^*)].
\end{align*}
Using the chain rule, it is easy to derive that 
\begin{align*}
    &\nabla^2_{\theta} \log p(\mathcal{O}=1;\theta)  \\
    =& \mathbb{E}_{q(u;\theta)}[\nabla^2_{\theta} \log p(u;\theta)] \\
        &~+ \mathbb{E}_{q(u;\theta)} [\nabla_{\theta} \log p(u;\theta) \nabla_{\theta} \log p(u;\theta)^\top] \\
        &~- \mathbb{E}_{q(u;\theta)} [\nabla_{\theta} \log p(u;\theta)] \mathbb{E}_{q(u;\theta)} [\nabla_{\theta} \log p(u;\theta)]^\top \\
    =& \mathbb{E}_{q(u;\theta)}[\nabla^2_{\theta} \log p(u;\theta)] 
        + \mathrm{Cov}_{u \sim q(u;\theta)}[\nabla_\theta \log p(u;\theta)].
\end{align*}
Evaluating at $\theta = \theta^*$, we have
\begin{align*}
    &\nabla^2_{\theta} \log p(\mathcal{O}=1;\theta^*) \\
    =& \mathbb{E}_{q(u;\theta^*)}[\nabla^2_{\theta} \log p(u;\theta^*)] 
        + \mathrm{Cov}_{u \sim q(u;\theta^*)}[\nabla_\theta \log p(u;\theta^*)] \\
    =& \nabla^2_{11}Q(\theta^*, \theta^*) + \nabla^2_{12}Q(\theta^*, \theta^*).
\end{align*}

For notation simplicity, let 
\begin{align*}
    H = \nabla^2_{11} Q(\theta^*, \theta^*), \\
    C = \nabla^2_{12} Q(\theta^*, \theta^*).
\end{align*}
Since we assume $\theta^*$ is a strict local maximum, we have $\nabla^2_{\theta} \log p(\mathcal{O}=1;\theta) = H + C \prec 0$. Furthermore, since $C$ is a covariance matrix, it must be positive semi-definite, and $H \prec 0$. Thus we have
\begin{align*}
    &\nabla^2_{\theta} \log p(\mathcal{O}=1;\theta)  \\
    =& H + C \\
    =& (-H)^{1/2} \big(-I + (-H)^{-1/2} C  (-H)^{-1/2} \big)(-H)^{1/2}  \\
    \prec& 0,
\end{align*}
from which we can conclude $(-H)^{-1/2} C  (-H)^{-1/2} \prec I$. Since $(-H)^{-1/2} C  (-H)^{-1/2}$ and $-H^{-1}C$ are similar matrices, they have the same spectral radius, thus $\rho(-H^{-1}C) < 1$.

We finally study the fixed iteration \eqref{eq:EM-fixed-point-appendix}. From the optimality condition of \eqref{eq:EM-fixed-point-appendix}, we have
\begin{align*}
    \nabla_1 Q(M(\theta), \theta) = 0 ~\forall \theta,
\end{align*}
by differentiating through the optimality condition with respect to $\theta$, we have 
\begin{align*}
    \nabla_{11} Q(M(\theta), \theta) \partial M(\theta) + \nabla_{12} Q(M(\theta), \theta) = 0 ~\forall \theta.
\end{align*}
At the fixed point $\theta = \theta^*$, we have $M(\theta^*) = \theta^*$, thus
\begin{align*}
    \partial M(\theta^*) = - \nabla^2_{11} Q(\theta^*, \theta^*)^{-1} \nabla^2_{12} Q(\theta^*, \theta^*) = -H^{-1}C,
\end{align*}
and $\rho(\partial M(\theta^*)) < 1$ from previous results. Then the convergence result
follows directly from \cite[Sec~3.9]{mclachlan2008algorithm} and Ostrowski Theorem \cite[Thm 10.1.3]{ortega2000iterative}

\end{proof}

\subsection{Proof of \Cref{thm:exp-family-sufficient-increase}}
\label{app:sufficient_increase}
\begin{proof}
Define
\begin{align*}
    g(\eta_1, \eta_2) \!
    = \! \mathbb{E}_{u \sim q(u; \eta_2)}
        \bigg[ 
            \log 
            \frac{
                h(u) \exp[\eta_1^\top T(u) \! - \! A(\eta_1) \! - \! \frac{J(u)}{\tau}]
            }{
                q(u; \eta_2)
            }
        \bigg],
\end{align*}
which is the ELBO of $\ell(\eta_1)$ evaluated at $q(\cdot;\eta_2)$. Hence
\begin{align*}
    \ell(\eta_1) = g(\eta_1, \eta_1) \geq g(\eta_1, \eta)
    \quad \forall \eta.
\end{align*}
Therefore,
\begin{align*}
    \ell(\eta_+) - \ell(\eta)
    &= g(\eta_+, \eta_+) - g(\eta, \eta) \\
    &\ge g(\eta_+, \eta) - g(\eta, \eta) \\
    &= \mathbb{E}_{u \sim q(u; \eta)}
        \big[ 
            (\eta_+ - \eta)^\top T(u) - A(\eta_+) + A(\eta)
        \big] \\
    &= (\eta_+ - \eta)^\top \mathbb{E}_{q(u;\eta)}[T(u)] - A(\eta_+) + A(\eta).
\end{align*}
Using the moment-matching condition \eqref{eq:exp-family-m-step-moment-matching},
\begin{align*}
    \ell(\eta_+) - \ell(\eta)
    &\ge A(\eta) - A(\eta_+) - (\eta - \eta_+)^\top \nabla_\eta A(\eta_+) \\
    &\ge \frac{\alpha}{2}\|\eta_+ - \eta\|_2^2,
\end{align*}
where the last inequality follows from the $\alpha$-strong convexity of $A(\eta)$.
\end{proof}

\subsection{Additional Experiments}

To further isolate the benefit of multi-modality beyond simple symmetry-breaking, we conducted the additional experiment shown in \Cref{fig:mog}. At iteration 10, Gaussian MPPI has already collapsed to the right-turning mode, whereas MoG-MPPI maintains two distinct trajectory modes corresponding to the left- and right-turning strategies. By iteration 12, both methods assign larger weights to the right-turning direction; however, MoG-MPPI still preserves a non-negligible left-turning component. 
At iteration 14, the enlarged obstacle causes the right-turning strategy to incur a higher cost than the left-turning alternative. As a result, MoG-MPPI reallocates probability mass toward the left-turning component and eventually commits to the left route at iteration 16. In contrast, once Gaussian MPPI has collapsed to the right-turning mode, the unimodal distribution struggles to recover and transition to the alternative strategy.

This experiment demonstrates that the advantage of multi-modality is not limited to obstacle avoidance or symmetry-breaking. Rather, maintaining multiple modes enables the controller to preserve and adaptively reweight several feasible strategies before making a final commitment.

\end{document}